\documentclass[10pt,conference]{IEEEtran}
\IEEEoverridecommandlockouts
% The preceding line is only needed to identify funding in the first footnote. If that is unneeded, please comment it out.
\usepackage{amsmath,amssymb,amsfonts}
\usepackage{amsthm}
\newtheorem{definition}{Definition}
\newtheorem{example}{Example}
\newtheorem*{remark}{Remark}

\newtheorem{lemma}{Lemma}

\usepackage[ruled]{algorithm2e}
\usepackage{algpseudocode}
\usepackage[braket, qm]{qcircuit}
\usepackage{graphicx}
\usepackage{textcomp}
\usepackage{xcolor}
\usepackage{placeins}
\usepackage{hyperref}
\usepackage{makecell}
\usepackage{float}
\floatstyle{plaintop}
\restylefloat{table}
\usepackage[numbers,sort&compress]{natbib}

% Custom commands

\def\BibTeX{{\rm B\kern-.05em{\sc i\kern-.025em b}\kern-.08em
    T\kern-.1667em\lower.7ex\hbox{E}\kern-.125emX}}
\begin{document}

\title{Mapping quantum circuits to modular architectures with QUBO\\}

\author{\IEEEauthorblockN{Medina Bandic \IEEEauthorrefmark{2}\IEEEauthorrefmark{1}, Luise Prielinger \IEEEauthorrefmark{2}\IEEEauthorrefmark{1},
Jonas Nüßlein \IEEEauthorrefmark{5},
Anabel Ovide \IEEEauthorrefmark{3},
Santiago Rodrigo \IEEEauthorrefmark{4},
Sergi Abadal \IEEEauthorrefmark{4}, \\
Hans van Someren \IEEEauthorrefmark{1}, Gayane Vardoyan \IEEEauthorrefmark{1},
Eduard Alarcon \IEEEauthorrefmark{4},
Carmen G. Almudever \IEEEauthorrefmark{3}
and
Sebastian Feld \IEEEauthorrefmark{1}}
\\
\IEEEauthorblockA{\textit{ \IEEEauthorrefmark{1} Delft University of Technology (QuTech)},
The Netherlands\\
}
\IEEEauthorblockA{\textit{\IEEEauthorrefmark{3} Universitat Politècnica de Valencia}, 
Spain\\
}
\IEEEauthorblockA{\textit{\IEEEauthorrefmark{4} Universitat Politècnica de Catalunya},
Spain\\
}
\IEEEauthorblockA{\textit{\IEEEauthorrefmark{5} LMU Munich},
Germany\\
}
\\
\IEEEauthorblockA{ \IEEEauthorrefmark{2} These authors contributed equally to this work.
}}

\maketitle

 \begin{abstract}
Modular quantum computing architectures are a promising alternative to monolithic QPU (Quantum Processing Unit) designs for scaling up quantum devices. They refer to a set of interconnected QPUs or cores consisting of tightly coupled quantum bits that can communicate via quantum-coherent and classical links. In multi-core architectures, it is crucial to minimize the amount of communication between cores when executing an algorithm. Therefore, mapping a quantum circuit onto a modular architecture involves finding an optimal assignment of logical qubits (qubits in the quantum circuit) to different cores with the aim to minimize the number of expensive inter-core operations while adhering to given hardware constraints. In this paper, we propose for the first time a Quadratic Unconstrained Binary Optimization (QUBO) technique to encode the problem and the solution for both qubit allocation and inter-core communication costs in binary decision variables. To this end, the quantum circuit is split into slices, and qubit assignment is formulated as a graph partitioning problem for each circuit slice. The costly inter-core communication is reduced by penalizing inter-core qubit communications. The final solution is obtained by minimizing the overall cost across all circuit slices. To evaluate the effectiveness of our approach, we conduct a detailed analysis using a representative set of benchmarks having a high number of qubits on two different multi-core architectures. Our method showed promising results and performed exceptionally well with very dense and highly-parallelized circuits that require on average 0.78 inter-core communications per two-qubit gate.  
\end{abstract}

\begin{IEEEkeywords}
quantum circuit mapping, distributed multi-core quantum computing architectures, Quadratic Unconstrained Binary Optimization (QUBO), quantum compilation, full-stack quantum computing systems
\end{IEEEkeywords}

\section{Introduction}
    
A major challenge for current NISQ (Noisy Intermediate Scale Quantum) devices is scalability. These processors suffer from high error rates and limited qubit counts and connectivity, which hinder the demonstration of the full potential of quantum computing. Well-known quantum algorithms, such as Shor's algorithm for factoring large numbers and Grover's algorithm for searching an unsorted database, are expected to provide significant speedups over classical algorithms, but these benefits will likely only be realized with quantum computers that have far more qubits than current systems, which are typically limited to hundreds of qubits. 
Most present-day quantum computers are implemented as single-processor devices, i.e., a single chip that contains all qubits. These designs are hard to scale mostly due to crosstalk \cite{sarovar2020detecting} and limitations related to control electronics \cite{lao2021timing}. 
An alternative architectural design to scale quantum computers, similar to classical computing, lies in multi-processor (multi-core) computers, which have already been proposed by various quantum processor manufacturers.~\cite{Monroe_2014, laracuente2023modeling, https://doi.org/10.48550/arxiv.2201.08861,https://doi.org/10.48550/arxiv.2210.10921}.

These new architectures will enable 
distributed multi-core quantum computing, that is, executing a large algorithm consisting of more qubits than there are in a single
processor by distributing it over different cores. 
In this case, similarly to the resource-constrained NISQ devices, a quantum circuit mapping process \cite{li2019tackling} is required to ensure the efficient use of hardware resources and in turn to maximize the algorithm success rate. Considering that qubit interactions within a core are negligible when compared to those in between cores in terms of operation time and fidelity \cite{baker2020time}, quantum circuit mapping in the multi-core regime is mainly focused on minimizing
the amount of inter-core communications (operations between qubits that are in different processors). This is mostly achieved by finding a good assignment or allocation of the logical qubits (qubits in the circuit) to the physical qubits in different cores and by optimally performing inter-core operations when required.

Finding an optimal solution for the mapping problem can be computationally infeasible for a large number of qubits, even for single-core devices \cite{Siraichi2018}. To address this challenge, various mapping algorithms have been proposed 
\cite{bandic2020structured}. However, due to the early-stage development of modular quantum computing architectures, just a few quantum circuit mapping techniques \cite{baker2020time,ferrari2020compiler} have been explored, and they are only tested on architectures with a small number of qubits and simple, unrealistic all-to-all connectivity between qubits with limited benchmark sets. 

In contrast to prior approaches, we address the circuit mapping problem for multi-processor devices using a quadratic unconstrained binary optimization (QUBO) formulation. This novel approach is employed for solving qubit allocation, as well as inter-core qubit communication and routing.  
Among the advantageous properties of this QUBO-based approach, four stand out: 
\begin{itemize}
    \item[i.] The QUBO formulation a priori encompasses the entire solution landscape for the quantum circuit mapping problem, and therefore does not exhibit limitations stemming from a reliance on look-ahead functions. The latter or other local estimates are a common strategy for decision-making regarding qubit placement and transfer \cite{baker2020time}.
    \item[ii.] Unlike previous mapping solutions, the optimization process is decoupled from the objective function, thereby enabling the selection of a tailored optimization method, depending on the size and scope of the problem at hand. More precisely, the method can be flexibly adapted to a use case by choosing a suited exact or heuristic solver without reformulating the method, for a small or large quantum circuit to be mapped, respectively.
    \item[iii.] The aforementioned separation between the objective function and finding its optimal solutions allows for the latter to evolve with the development of new solving approaches. This development includes not only even more powerful software-based solvers \cite{punnen2022quadratic} for QUBO instances, but also the construction of single-purpose quantum hardware, a so-called quantum annealer \cite{dwaveweb}. 
    \item[iv.] The k-partitioning problem \cite{chopra1993partition}, which is often used in qubit mapping for multi-core systems \cite{baker2020time}, can be expressed as a quadratic objective function and has been well-defined in the literature \cite{ushijima2017graph}. This makes it an ideal candidate for a QUBO formulation. 
\end{itemize}

The aim of our method is to find the optimal placement of qubits for all circuit slices (Fig. \ref{fig:objective}) to a multi-core system by representing the per-slice qubit allocation as a k-partitioning problem and penalizing the cost of any inter-core communication simultaneously.   
This method is tested by mapping an extensive set of benchmark circuits, that cover a substantial range of parameters, including the number of qubits, circuit depth, gate density and structure, to two different multi-core architecture topologies. We hope that this work represents a stepping stone for the development of new mapping techniques for modular architectures that will be essential components of future full-stack quantum systems \cite{bandic2022full}. 

The paper is organized as follows: Sec. \ref{Sec2} introduces the quantum circuit mapping problem, discusses the challenges of multi-core quantum computation, and presents the concept of QUBO in general. In Sec. \ref{Sec3} our QUBO-based approach for mapping quantum circuits to multi-core devices is described in detail, including the method's objectives, definition and proofs of the objective function. Sec. \ref{Sec4} presents the experimental setup, the benchmark set used and the chosen hardware platforms. Sec. \ref{Sec5} discusses the obtained results.
Finally in Sec. \ref{Sec6} we describe possible future directions including potential solutions to overcome current limitations and conclude the paper.

\section{Background and related work}

\label{Sec2}

\subsection{Mapping of quantum circuits: from single to multiple cores}

\begin{figure}[h!]
\centering
    \includegraphics[width=0.7\linewidth]{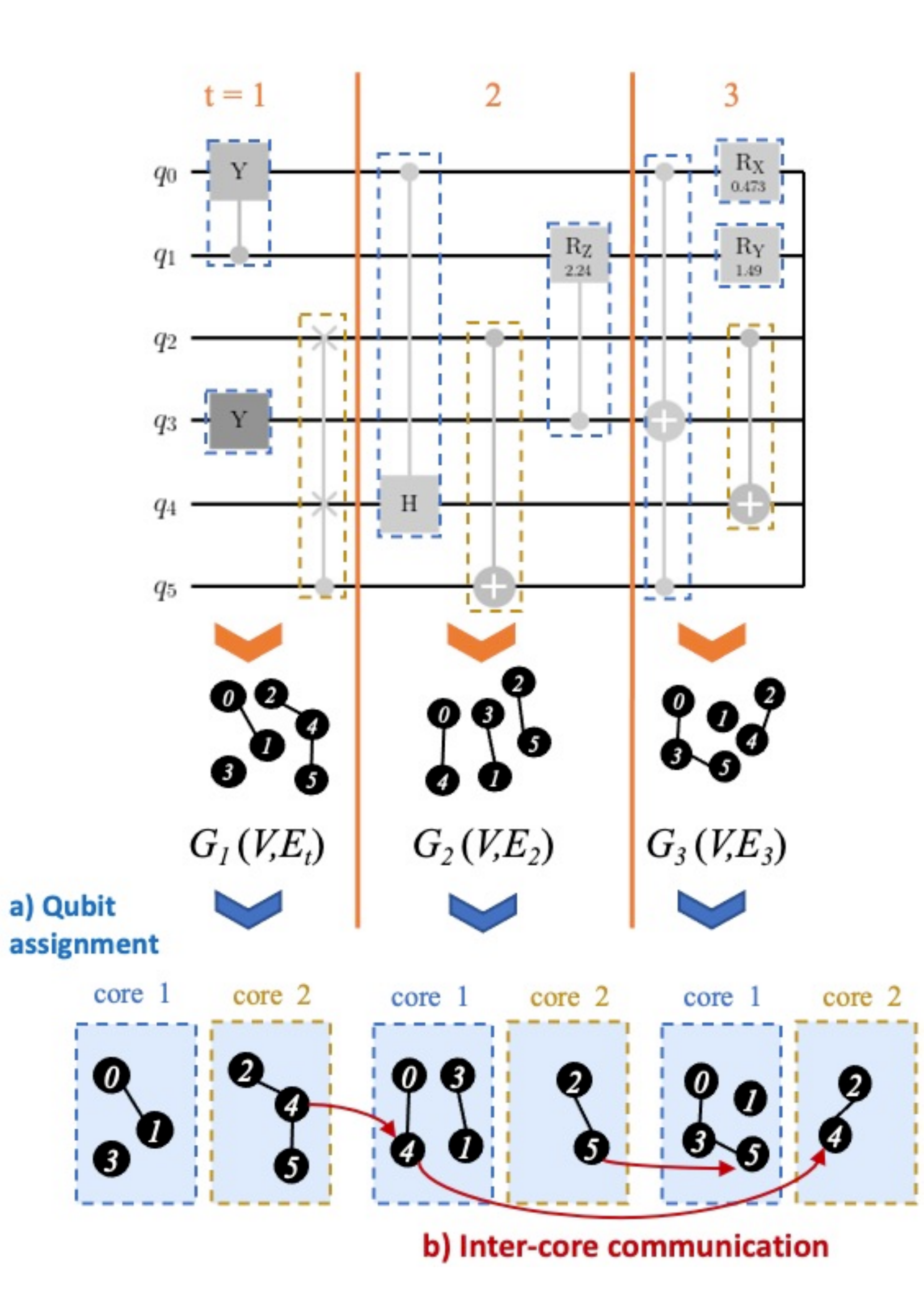}
    \caption{Graphical outline of Ex. \ref{ex:mapping}. A quantum circuit is split into three circuit slices $t = \{1,2,3\}$ (where the circuit slices' borders are marked by solid vertical lines in orange) represented by the respective interaction graphs $G_t(V,E_t)$. The partitioned circuit is mapped to a $k=2$ core system (two blue shaded rectangles below each circuit slice), each with a capacity of $c_1=c_2=4$ physical qubits. \textbf{a)} Logical qubits (black circles) sharing a gate interaction are assigned to the same core (dashed rectangles). The first gate $(0, 1)$ of the circuit is assigned to core 1 marked in blue dashed rectangles, while the three-qubit gate involving qubits (2,4,5) is assigned to core 2.\textbf{ b)} If a qubit is assigned to a different core in subsequent slices, it must undergo a state transfer, i.e., an inter-core communication, shown by red arrows. Hence, the solution shown requires three inter-core qubit communications.}
    \label{fig:objective}
\end{figure}

Due to the previously mentioned limitations of current NISQ devices, algorithms (in the form of quantum circuits) may need to be modified in order to be executed, through a process called quantum circuit mapping \cite{bandic2020structured}. This process involves: i) assigning the logical qubits of the quantum circuit to the physical qubits of the quantum chip, an example of which is provided in Fig. \ref{fig:objective}; ii) moving (routing) the logical qubits that need to interact in form of two-qubit gates to adjacent positions on the chip. Note that this is necessary as physical qubits for most quantum technologies are not all-to-all connected; and in some cases includes iii) scheduling operations to maximally parallelize them with the aim of lowering the execution time, which is important due to the decoherence of qubits \cite{bandic2022full}. The quantum circuit mapping process can vary from device to device due to differences in technology and qubit connectivity. It is essential to perform this task to guarantee the effective utilization of scarce hardware resources and to decrease the likelihood of errors that may arise during the execution of quantum operations by minimizing the number of additional gates and the circuit latency. However, solving the quantum circuit mapping problem for a large number of qubits can be computationally unfeasible, even for contemporary single-core devices. To tackle it, diverse mapping algorithms have been proposed, including heuristic or brute-force strategies, graph-theoretical techniques, dynamic programming algorithms, and machine learning-based solutions \cite{zulehner2018efficient,wille2016look, li2019tackling, lao2021timing, itoko2020optimization,pozzi2020using,jiang2021quantum,li2020qubit,tannu2019not, steinberg_topological-graph_2022, bandic2020structured}.

These mapping techniques used in single-processor NISQ devices are however not applicable to multi-core quantum computing architectures, which have emerged as a promising approach to scale-up quantum computing systems. 
In the multi-core architectural approach, cores (QPUs) are connected with classical and quantum communication links. Quantum links allow to 
`move' quantum states from one core to another or to perform inter-core quantum gates \cite{10.1145/3579367} (e.g., remote extended CNOT) making use of entanglement.  
Classical links are required to assist core coordination and job distribution \cite{Rodrigo2021}. The architectural complexity involving the communication channels and traffic make it more difficult to perform quantum circuit mapping when compared with mapping to single-core devices \cite{rodrigo2022characterizing, ovide2023mapping}.

This consequently led to the development of new techniques for multi-core architectures whose  main aim is to reduce the number of expensive inter-core operations. To achieve this, an optimal assignment 
of the logical qubits to the physical qubits of different cores is crucial.  
The literature on this topic is limited: some proposals focused on the quantum compilation for distributed quantum computing \cite{cuomo2023optimized,ferrari2020compiler}, whereas others focused only on  the quantum circuit mapping part of compilation \cite{baker2020time}, which is closer to our work. In \cite{baker2020time}, the quantum circuit is split into smaller partitions. Interaction graphs representing operations within a circuit slice are then mapped onto cores by grouping the qubits with the highest amount of interactions together, and while taking into account different look-ahead approaches. This was one of the  first proposals for mapping quantum algorithms to multi-core devices, but it was only tested on a very simple all-to-all architecture and with a limited set of benchmarks. 

In order to improve their initial approach in terms of circuit and graph partitioning with an overly-restricting local 
look-ahead function, we rely on previous subgraph isomorphism- \cite{sivarajah_tketrangle_2021, Siraichi19,steinberg_topological-graph_2022,li_qubit_2021,jiang2021quantum} and QUBO optimization-based single-core solutions \cite{Dury2020,prielinger2023aquadratic} where now instead of mapping logical qubits to optimal subsets of coupling graph representing the physical qubits and their connections (Fig. \ref{fig:objective}) we map qubits to different cores. In our approach, we use a k-partitioning-based QUBO formulation, which is, together with the mapping procedure explained in more detail in the following sections.

\subsection{Quadratic unconstrained binary optimization}
\label{Sec2.2}
The introduction of Quadratic Unconstrained Binary Optimization, termed QUBO or UBQP in literature, goes back to the 1960s \cite{hammer1969pseudo}. We will use the following definition
\begin{definition}[Quadratic unconstrained binary optimization] 
    Quadratic Unconstrained Binary Optimization (QUBO) formulation presents a NP-hard \cite{punnen2022quadratic} mathematical problem to be optimized. The objective function to be minimized is expressed by the formula:
    {\footnotesize
    \begin{equation}
        \underset{\mathbf{x}}{\min}\ \mathbf{x}^T Q \mathbf{x} = \underset{\mathbf{x}}{\min} \sum_{i<j} Q_{ij} x_i x_j + \sum_i Q_{ii} x_i, 
    \end{equation}}
    where $\mathbf{x}$ is a vector of binary decision variables of size $N$, i.e., $\mathbf{x} = \{0,1\}^N$ and $Q$ is a symmetric, square matrix of $N\times N$ real valued constants \cite{punnen2022quadratic}.
\end{definition}
In recent years the QUBO model gained scientific attention which led to the development of a wide range of applications in combinatorial optimization, as well as of effective solver techniques \cite{glover2018tutorial, lucas2014ising, Calude2017, Neukart2017, Date2021, nusslein2022black}. Unlike many other optimization formulations, a QUBO instance can be solved with a quantum annealer, as for example the one constructed by the hardware provider D-Wave Systems Inc.\cite{dwaveweb}. Even though the computational solving ability of these devices is still restricted to small problem sizes, the rapid evolution of these systems provides hope that more realistic problems can be addressed in the near to mid-term future \cite{Yarkoni2022, vanDam2001}. 
Therefore QUBO instances are in practice still solved with either exact or approximate classical procedures on classical computers. Unlike heuristics or approximation algorithms, exact QUBO solvers guarantee that the solution found is optimal. However, the NP-hard nature of a QUBO problem implies that exact solvers result in infeasible running times for large problem sizes \cite{kochenberger2014unconstrained}. Therefore, approximate or heuristic methods are often used to find near-optimal solutions within reasonable time limits. Prominent examples are simulated annealing, tabu search and steepest descent, but also commercial cloud-based  solvers \cite{oshiyama2022benchmark}. A comprehensive literature review of QUBO applications and solving methods can be found in \cite{kochenberger2014unconstrained, dunning2018works, punnen2022quadratic}.

\section{Solving the mapping problem for distributed multi-core quantum computing with QUBO}
\label{Sec3}
In this section, we will introduce our mapping approach and explain its objectives in detail including an illustrative example. We will then give the required proofs and see how solutions can differ due to a simple scalar weighting factor $\lambda$, which we use to scale the different components of the objective function.

\subsection{Objectives of the QUBO mapping}

The quantum circuit to be mapped is split into circuit slices, each consisting of a sequence of gates. This partition is generated via a recursive slicing procedure, for which we will provide more details in Sec. \ref{Slicing}. A circuit slice comprises a set of quantum gates that can be performed without the use of inter-core communication\footnote{The definition of a circuit slice is specific to this study and may not match the concept of a time slice in other publications.}. A circuit slice $t$, therefore, presents an interaction graph $G_t(V,E_t)$, in which the logical qubits of the circuit form the node set $V$ and set specific qubit interactions\footnote{Only qubit interactions with $n>1$ are relevant to the mapping procedure since single-qubit gates are not core-dependent.} form edge set $E_t$. Each edge $e\in E_t$ denotes at least one gate operation between qubits present in circuit slice $t$. The objective function should then a) find an assignment for each time-slice graph, i.e., all qubits involved in the same gate should be assigned to the same core without exceeding the core's capacity $c_j$ the maximum number of logical qubits a core $j$ can hold; and b) minimize inter-core communications between assignments, where the latter refers to the relocation of a logical qubit state between two different cores of a distributed quantum system. Tab. \ref{tab:nota} summarizes the notation used in this study.

\begin{table}[ht]
    \centering
    \resizebox{0.8\linewidth}{!}{
    \begin{tabular}{r|l}
        Symbol & Description \\ \hline
        $n$ & \text{Number of logical qubits in a quantum circuit} \\
        % $i\in\{0,\dots, n-1\}$ & Logical qubit index \\
        $k$ & Number of cores in the multi-core system\\
        % $j \in\{1,\dots, k\}$ & Core index\\
        $c_j$ & \makecell[l]{Capacity, number of qubits that a core $j$ can hold}\\
        $T$ & Total number of circuit slices\\
        % $t \in\{1,\dots, T\}$ & \text{Time slice}\\ 
        \hline \vspace{0.2cm}
    \end{tabular}}
    \caption{Notation used in this study.}
    \label{tab:nota}
\end{table}

Let us consider the following example:
\begin{example}[Qubit assignment and inter-core communication] \label{ex:mapping}
   Given a six-qubit circuit with seven multi-qubit\footnote{The method is in general not restricted to how many qubits the gates of the circuit involved. Note, however, that most quantum processors only support up to two-qubit gates.} gates depicted in Fig. \ref{fig:objective} split into circuit slices $t=\{1,2,3\}$ (details on slicing are given in  \ref{Slicing}). The objective is to map the former to a two-core system, each core with a capacity of $c_1=c_2=4$. We can write the circuit interactions as follows:
   \begin{align}
        \begin{array}{c|c}
        t & Qubits \\ \hline
        1 & (0,1), (2,4,5), 3\\
        2 & (0,4), (2,5), (1,3)\\
        3 & (0,3,5), (2,4), 1\\
        \end{array} 
   \end{align}
   where each row holds the six logical qubits, and tuples hold qubits involved in the same multi-qubit gate. In each circuit slice, all $n$ qubits need to be assigned to cores, such that qubits sharing a tuple also share the same core $j$, and the core`s capacity $c_j$ is not exceeded. Possible assignments are
   \begin{align}
        \begin{array}{c|c}
        t  & Assignment \\ \hline
        1 & {(0,1),3 \mapsto core_1}; {(2,4,5) \mapsto core_2}\\
        2 &  {(0,\textcolor{red}{4}), (1,3) \mapsto core_1}; {(2,5) \mapsto core_2}\\
        3 & {(0,3,\textcolor{red}{5}),1\mapsto core_1}; {(2,\textcolor{red}{4}) \mapsto core_2}\\
        \end{array} 
   \end{align}
    where figures in red indicate inter-core quantum state transfers of qubits $4$ and $5$. Namely,
   \begin{align}
        \begin{array}{c|c|c}
        t  & \text{Qubit}\ 4 & \text{Qubit}\ 5 \\ \hline
        1-2  & core_2 \rightarrow core_1 &\\
        2-3 &  core_1 \rightarrow core_2  & core_2 \rightarrow core_1 \\
        \end{array}.
   \end{align}

\end{example}
The assignment portion of the problem can be cast into an instance of the $k$-partitioning problem, a well-known graph theoretical problem \cite{chopra1993partition} where $k$ denotes the number of subsets in the partition.
We accomplish this as follows:
\begin{definition}[Qubit Assignment] \label{def:qubitassignment}
    Let $G_t=(V,E_t)$ be an interaction graph of a circuit slice $t$ with logical qubit set $V$ and edge set $E_t$, where an edge $e\in E_t$ denotes at least one gate operation between qubits present in circuit slice $t$. For fixed integers $k$ and $c_j$, the problem is to find a partition $\Phi^t = (\Phi^t_1, \dots, \Phi^t_k) $, of the qubit set $V$ into at most $k$ subsets, such that the subsets $|\Phi^t_j| \le c_j$, and the number of cut edges is minimized. A cut edge is defined as an edge whose endpoints are in different subsets $\Phi_j, \Phi_l$, where $j\ne l$.
\end{definition}

\subsection{The objective function}
In order to define the objective function, we first need to introduce binary decision variables:
\begin{definition}[Binary Solution Array $\mathbf{x}$] \label{def:solarray}
A solution vector $\mathbf{x}$ comprises $T$ circuit slices $\mathbf{x} = [\mathbf{x}^1, ..., \mathbf{x}^T]$, where a solution $\mathbf{x}^t$ for time slice $t$ presents a qubit assignment. The latter is again comprised of $k$ subsets
{\footnotesize
\begin{equation}
\mathbf{x}^t = [\mathbf{x}_{1}^t,... \mathbf{x}_{k}^t]
\end{equation}}%
where a vector $\mathbf{x}_{j}^t$ holds $n=|V|$ solution variables. A solution variable $x^t_{ij} = 1$ indicates that qubit $i \in V$ is assigned to core $j$ in circuit slice $t$ and $x^t_{ij} =0$ if it is not assigned to the core $j$. This yields a problem size of $N=T\cdot k \cdot n$, where $T,k$ and $n$ are the numbers of circuit slices, cores and logical qubits, respectively.
\end{definition}

In order to impose core capacities $c_j$ during the mapping of logical qubits, we will make use of so-called slack variables. Using slack variables is a well-established technique for replacing inequalities with equations in mathematical programming problems \cite{murtagh1978large}.
Let us consider the following example:
\begin{example}
    Assume we are given a circuit slice $t$ of a three-qubit circuit and that core $j$ has capacity $c_j=2$. The inequality constraint 
    {\footnotesize
    \begin{equation}
        \sum^3_{i=1} x^t_{ij} \le 2 \label{eq:ex:slack1}
    \end{equation}}%
    states that we can assign no more than two qubits to core $j$. In order to write the inequality constraint as an equation, which we can then use as a quadratic objective function term, we introduce two slack variables, $y^t_{i'j}\in \{0,1\}^2$, where $y^t_{i'j}=1$ signifying a physical qubit $i'$ being occupied by a logical qubit and $y^t_{i'j}=0$ otherwise. We can then write
    {\footnotesize
    \begin{equation}
        \sum^3_{i=1} x^t_{ij} - (y^t_{1j} + y^t_{2j}) = 0.\label{eq:ex:slack2}
    \end{equation}}
    Eqs. (\ref{eq:ex:slack1}) and (\ref{eq:ex:slack2}) are equivalent. One can always ensure compliance of the solution variables to (\ref{eq:ex:slack1}) by choosing suitable slack variables in (\ref{eq:ex:slack2}). As an example, if $x^t_{1j}=x^t_{2j}=0,\ x^t_{3j}=1$, then one of the slack variables is set to one, e.g., $y^t_{1j}=1,\  y^t_{2j}=0$. If Eq. (\ref{eq:ex:slack1}) is violated, however,  i.e., $x^t_{1j}=x^t_{2j}=x^t_{3j}=1$, Eq. (\ref{eq:ex:slack2}) cannot be satisfied either, since the sum of the slack variables cannot exceed two.
\end{example}

With Defs. \ref{def:qubitassignment} and \ref{def:solarray}, we can set up the first part of the quadratic binary objective function, namely the assignment part. To do so, we adapt the k-partitioning instance given by \cite{ushijima2017graph} for a circuit slice $t$, termed $F(\mathbf{x}^t)$. The latter encodes the optimal solution(s) $\mathbf{x}^{t*}$ at its minimum.
{\footnotesize
\begin{align}
    \min_{\mathbf{x}^t} \ &F(\mathbf{x}^t) = \min_{\mathbf{x}^t} \ S(\mathbf{x}^t) + P(\mathbf{x}^t) + R(\mathbf{x}^t) \label{eq:F}
\end{align}}%
with
{\footnotesize
\begin{align}
S(\mathbf{x}^t) &:= \sum^n_{i=1}\left(\sum^k_{j=1} x^t_{ij} - 1\right)^2 \label{eq:S},\\ 
            R(\mathbf{x}^t) &:= \sum^k_{j=1}\left(\sum^n_{i=1} x^t_{ij} - \sum^{c_j}_{i'=1} y^t_{i'j}\right)^2\label{eq:R},\\
            P(\mathbf{x}^t) &:= \sum^k_{j=1} \mathbf{x}^{tT}_{j} L_t  \mathbf{x}^t_{j} \label{eq:P},
\end{align}}%
where $\mathbf{x}_j^t\in \{0,1\}^{n}$ is the binary solution vector of a circuit slice $t$ belonging to a core $j$ and the superscript $T$ in $\mathbf{x}^{tT}$ denotes the transposed version of the vector $\mathbf{x}^{t}$. A minimum value of $S$ ensures that a qubit $i$ is assigned to exactly one core, $R$ penalizes an assignment that exceeds the
core’s capacity with $c_j$ additional binary slack variables $\textbf{y}^t_j = \{0,1\}^{c_j}$ for each core $j$, and lastly, $P$ serves to penalize any cut edge between cores using the graph Laplacian $L_t$ of circuit slice $t$. Given an undirected interaction graph $G_t = (V, E_t)$, the graph Laplacian $L_t$ is the $n\times n$ matrix 
{\footnotesize
    \begin{align}
        L_t = D_t - A_t
    \end{align}}%
    where $D_t$ is the degree matrix and $A_t$ is the adjacency matrix of the graph $G_t$  \cite{weisstein_laplacian_nodate}.

    \begin{lemma}
        If $\mathbf{x}_{opt}^t$ is a binary solution vector for which 
        $F(\mathbf{x}_{opt}^t)=0$ then $\mathbf{x}_{opt}^t$ represents an assignment of all logical qubits with the following properties: each qubit is mapped to exactly one core $j$; each adjacent pair of qubits in the graph $G_t(V,E_t)$ is assigned to the same core; and $\forall j\in\{1,\dots,k\}$ no more than $c_j$ qubits are assigned to core $j$. We refer to $\mathbf{x}_{opt}^t$ as a valid assignment.
    \end{lemma}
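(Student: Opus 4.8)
The plan is to exploit the fact that $F(\mathbf{x}^t) = S(\mathbf{x}^t) + P(\mathbf{x}^t) + R(\mathbf{x}^t)$ is a sum of three individually non-negative quantities, so that $F(\mathbf{x}_{opt}^t)=0$ forces $S(\mathbf{x}_{opt}^t) = P(\mathbf{x}_{opt}^t) = R(\mathbf{x}_{opt}^t) = 0$ separately; I would then read off each of the three claimed properties from exactly one of the three vanishing terms. To set this up I only need that $S$ and $R$ are sums of squares, hence $\geq 0$, and that the Laplacian $L_t$ is positive semidefinite, so each summand of $P$ is $\geq 0$.

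First I would handle $S$. Since $S(\mathbf{x}^t) = \sum_{i=1}^n\left(\sum_{j=1}^k x^t_{ij} - 1\right)^2$ is a sum of squares, $S=0$ holds iff $\sum_{j=1}^k x^t_{ij} = 1$ for every qubit $i$; because each $x^t_{ij}\in\{0,1\}$, a group of $k$ binary variables summing to $1$ has exactly one entry equal to $1$, so every logical qubit is assigned to exactly one core. Next, $R(\mathbf{x}^t) = \sum_{j=1}^k\left(\sum_{i=1}^n x^t_{ij} - \sum_{i'=1}^{c_j} y^t_{i'j}\right)^2 \geq 0$, and $R=0$ iff $\sum_{i=1}^n x^t_{ij} = \sum_{i'=1}^{c_j} y^t_{i'j}$ for every core $j$; since there are exactly $c_j$ binary slack variables on the right-hand side, that sum lies between $0$ and $c_j$, whence $\sum_{i=1}^n x^t_{ij} \leq c_j$, i.e.\ core $j$ holds at most $c_j$ qubits. (The converse direction — that any assignment respecting the capacities admits slack values making $R=0$ — is the content of the preceding slack-variable example and is not needed here.)

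The core of the argument is the term $P$. I would invoke the standard quadratic-form identity for the graph Laplacian: for any vector $\mathbf{v}\in\mathbb{R}^n$, $\mathbf{v}^T L_t \mathbf{v} = \sum_{(a,b)\in E_t} w_{ab}(v_a - v_b)^2$, where $w_{ab}\geq 1$ is the multiplicity (or positive weight) of the edge $(a,b)$; in particular $L_t \succeq 0$, so each summand $\mathbf{x}_j^{tT} L_t \mathbf{x}_j^t$ of $P(\mathbf{x}^t)$ is non-negative. Hence $P=0$ iff for every core $j$ and every edge $(a,b)\in E_t$ we have $x^t_{aj} = x^t_{bj}$. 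Combining this with $S=0$: qubit $a$ lies in the unique core $j$ with $x^t_{aj}=1$, and then $x^t_{bj}=1$ as well, so $b$ is assigned to the same core; thus every adjacent pair of qubits in $G_t$ shares a core. Assembling the three facts gives precisely the definition of a valid assignment.

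The main obstacle is essentially bookkeeping rather than mathematics: one must state the Laplacian identity carefully enough that it remains valid when $G_t$ has parallel edges or weights, and one must be explicit that "$\sum_j x^t_{ij}=1$ with $x^t_{ij}\in\{0,1\}$" genuinely pins down a single core. Both points are routine once noted, and the decomposition $F = S + P + R$ into non-negative pieces does all the structural work.
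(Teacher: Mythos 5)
Your proposal is correct and follows essentially the same route as the paper: decompose $F$ into the three non-negative terms $S$, $R$, $P$, conclude each must vanish, and read off one property from each. The only cosmetic difference is in the treatment of $P$, where you invoke the edge-sum identity $\mathbf{v}^T L_t \mathbf{v} = \sum_{(a,b)\in E_t} w_{ab}(v_a-v_b)^2$ while the paper expands $L_t = D_t - A_t$ and argues via vertex degrees; your version is, if anything, the cleaner and more airtight of the two.
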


    \begin{proof} 
    Since $S$ only consists of sums comprising quadratic terms with a minimum value of zero, $S$ is zero, iff for each $i\in V$ exactly one variable $\{x^t_{ij}|1 \le j \le k\}$ has value $1$. We call this property \textit{s}.
    
    $R$ is zero iff for each $1 \le j \le k$ 
    {\footnotesize
     \begin{align*}
         \sum^n_{i=0} x^t_{ij} = \sum^{c_j}_{i'=0} x^t_{ij}.
     \end{align*}}
    This can only hold, if $c_j$ or fewer variables of $\{x^t_{ij}|i\in V\}$ have a value of $1$. Otherwise $\sum^n_{i=0} x^t_{ij} > c_j$ and $R$ exhibits a value greater. We call this property \textit{r}.

    $P$ is zero iff for each $1 \le j \le k$ 
     {\footnotesize
     \begin{align*}
         \mathbf{x}^{T}_{j} L \mathbf{x}_{j}= \sum^n_{i=1} d_i x_{ij}^2- \sum^n_{i=1} \sum_{v\in V_i}x_{ij}x_{vj} = 0.
     \end{align*}}
     where $V_i$ denotes the set of adjacent vertices of vertex $i$. The above can only hold if for each $i$, $\sum_{v\in V_i}x_{ij}x_{vj} = |V_i| = d_i$. Otherwise, if there is less than $d_j$ adjacent vertices assigned to core $j$, $\sum_{v\in V_i}x_{ij}x_{vj} < |V_i|$ and $\mathbf{x}^{T}_{j} L \mathbf{x}_{j}$ exhibits a penalization value greater zero. We term this property $p$, where we left out the superscript $t$ for better readability.
    
    Together, properties \textit{s, r} and \textit{p} are equivalent with $\mathbf{x}^t$ encompassing a solution where each qubit $i\in V$ is mapped to exactly one core $j$, where less or equal to $c_j$ qubits are assigned to a core $j$ and where each adjacent pair of qubits in the graph $G_t(V,E_t)$ is assigned to the same core. The said properties require $S,R$, $P$ to be zero implying the same for $F$. Thus if $F$ is zero the properties \textit{s, r} and \textit{p} hold.
\end{proof}
The final step for the qubit assignment part is to generalize the objective function for the qubit assignment $F(\mathbf{x}^t)$ in Eq. (\ref{eq:F}) to all circuit slices $T$, i.e., $\mathbf{x}:= [\mathbf{x}^1, \dots, \mathbf{x}^{T}]$, which we term $H_a$
{\footnotesize
    \begin{align}
    \min_{\mathbf{x}} H_a (\mathbf{x}) 
     = \min_{\mathbf{x}} \sum^{T}_{t=1} F(\mathbf{x}^t) \label{eq:Ha}
    \end{align}}%
resulting in a solution to the qubit \textit{assignment} problem for all circuit slices.
\begin{proof}
    Since $H_a$ only consists of functions $F$ with a minimum value of zero, $H_a$ is zero iff $F$ is equal to zero for all circuit slices $t$. This is equivalent to the property that \textbf{x} presents a valid assignment for all circuit slices.
\end{proof}

\subsubsection*{b) Minimizing inter-core communication}
As discussed in Ex.~\ref{ex:mapping}, the other task of the objective function is to minimize potential inter-core communication between every pair of assignments $(F_{t-1}, F_t)$. 

Let us consider an arbitrary qubit $i$ in two subsequent times slices: $x_{ij}^{t-1}= x_{il}^{t}=1$. If $j\ne l$, then $i$ is assigned to core $j$ in circuit slice $t-1$ and to a different core $l$ in circuit slice $t$, and thus a state transfer is necessary between cores $(j,l)$. Hence, we can define the following penalization objective function for inter-core communication:
{\footnotesize
    \begin{align}
    \min_{\mathbf{x}} H_t (\mathbf{x}) 
     = \min_{\mathbf{x}} \sum^{T}_{t=2} d(\mathbf{x}^{t-1},\mathbf{x}^{t})
     \label{eq:Hm}
    \end{align}
where
    \begin{align}
     d(\mathbf{x}^{t-1}, \mathbf{x}^{t}):= \sum^n_{i=1}d_{jl} x^{t-1}_{ij}x^{t}_{il}\label{eq:d}
    \end{align} }%
where $d_{jl}$ is the hop count between cores $(j,l)$, i.e. the number of links a state traverses in order to travel from core $j$ to $l$. E.g., in the simplest case, each core is connected to all other cores (all-to-all) via a link over which quantum state teleportation can take place, then  $d_{jl}=1\ \forall(j,l)$ where $i\ne j$ and trivially $d_{jj}=0 $. In order to count the number of inter-core communications $M$, we can readily use Eq. (\ref{eq:Hm}):
\begin{example}[Count inter-core communications] \label{ex:countstate transfers}
    Given the assignments of Ex. \ref{ex:mapping}, qubit 4 is the only qubit that requires a state transfer between circuit slices 1-2. Using Eq. (\ref{eq:d}) yields
    {\footnotesize
    \begin{align*}
        d(\mathbf{x}^{1}, \mathbf{x}^{2})&=\underbrace{d_{11}}_0 x^1_{01}x^2_{01} + \underbrace{d_{11}}_0 x^1_{11}x^2_{11}  + 
        \underbrace{d_{22}}_0 x^1_{22}x^2_{22} 
        \\ &+ \underbrace{d_{11}}_0 x^1_{31}x^2_{31} + \underbrace{d_{12}}_1 x^1_{41}x^2_{42} +
        \underbrace{d_{22}}_0 x^1_{52}x^2_{52}\\&= 1
    \end{align*}}
    as expected.
\end{example}

Combining objectives (\ref{eq:Ha}) and (\ref{eq:Hm}) gives the final form of the quadratic objective function. The QUBO model for quantum circuit mapping to multi-core quantum devices thus reads
{\footnotesize
    \begin{align}
    \min_{\mathbf{x}} H (\mathbf{x}, \lambda)
     = \min_{\mathbf{x}} \left[H_a(\mathbf{x}) + \lambda\cdot H_t(\mathbf{x}) \right]\label{eq:H}
    \end{align}}
where $\lambda\ge0$ is a weighting parameter to scale $H_t$.

\begin{remark}
The proof of $H_a$ shows that if $H_a > 0$ at least one assignment is not valid and the circuit cannot be executed. Let us assume, for example, for a solution $\mathbf{x}_1$, the objective function results in $H_a=1$ and $H_t=1$. For another solution $\mathbf{x}_2$, the sums result in $H_a=0$ and $H_t=3$. In total the objective functions for $\mathbf{x}_1$ and $\mathbf{x}_2$ yield $2$ and $3$, respectively. Even though the solution $\mathbf{x}_1$ yields a wrong assignment, i.e., $H_a=1$ (with one inter-core communication $H_t=1$) and  $\mathbf{x}_2$ yields a correct assignment (with 3 inter-core communications $H_t=3$), the total objective function would decide in the favor of $\mathbf{x}_1$, which is undesirable.
\end{remark}
We therefore choose $\lambda$, such that $0 \le \lambda\cdot H_t \lesssim 1.$
with 
{\footnotesize
 \begin{align}
     \lambda \lesssim (Tn )^{-1}\label{eq:lam}
 \end{align}}
hence
{\footnotesize
 \begin{align*}
    \underset{\mathbf{x}^t}{\text{max}}\ H_t = \underset{\mathbf{x}^t}{\text{max}} \sum^{T}_{t=2} d(\mathbf{x}^{t-1},\mathbf{x}^{t}) < T \cdot n
 \end{align*}}%
in the all-to-all case, which favors correct assignments with a higher likelihood. Further, we calculate $H_a(\mathbf{x})$ in order to verify the correctness of $\mathbf{x}$, i.e., we only consider a solution to be valid, if $H_a(\mathbf{x})=0$.

\FloatBarrier
\subsection{Toy-model example}
In order to gain an intuitive understanding of how the weighting parameter $\lambda$ impacts the mapping solution, we show the outcome of the mapping of a simple circuit (Fig. \ref{fig:toyexample}) to a three-core quantum system, each core with a capacity of two physical qubits. We assume the cores are all-to-all connected, i.e., $d_{ij}=1\ \forall ij$ where $i\ne j$ and trivially $d_{jj}=0$. The overview of the given parameters is listed in Tab. \ref{tab:toyexample}.\\ 

\begin{table}[!ht]
    \centering
\resizebox{0.7\linewidth}{!}{
    \begin{tabular}{r|l}
        Parameter & Value \\ \hline
        $n$ (\# qubits)& $5$ \\
        $T$ (\# circuit slices)& $5$ \\
        $\lambda$ (penalization parameter) & $0.001, 0.1$ \\
        $k$ (\# cores) & 3 \\
        $c_j = c \ \forall j$ (capacity) & 2 \\
        \vspace{0.2cm}
    \end{tabular}}
    \caption{Parameter overview of toy-model example}
    \label{tab:toyexample}
\end{table}

\begin{figure}
    \centering
    \includegraphics[width=0.6\linewidth]{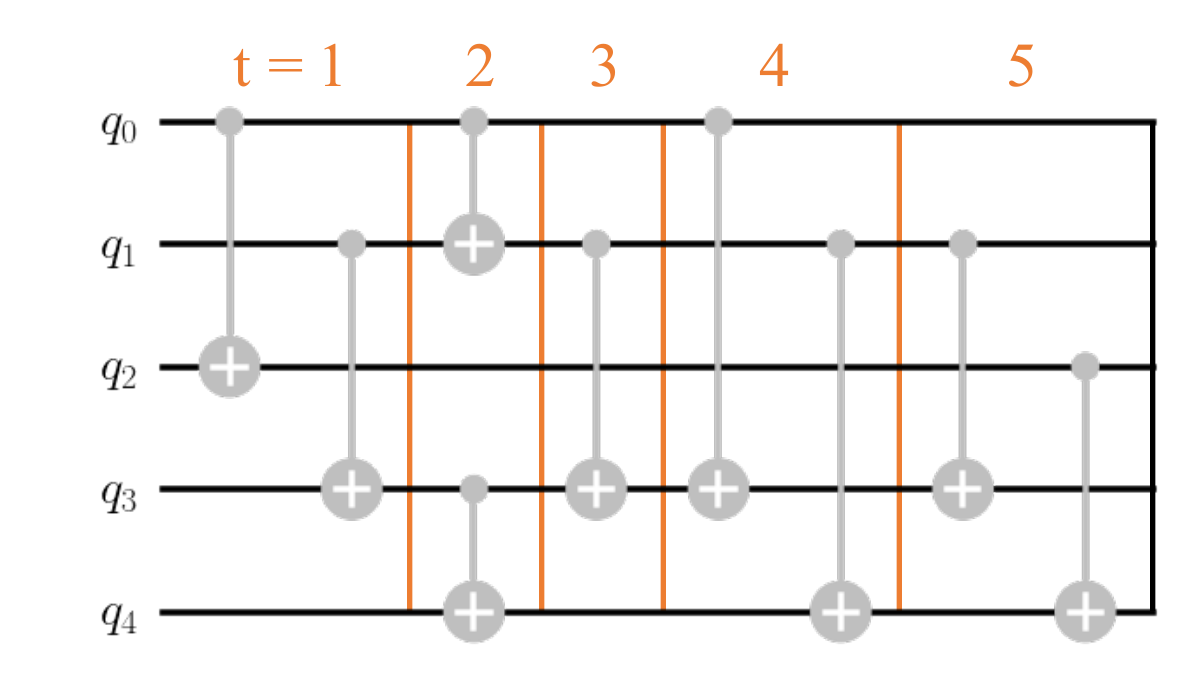}
    \caption{Toy-model quantum circuit consisting of nine two-qubit gates in five circuit slices $t\in[1,5]$ labeled in orange.}
    \label{fig:toyexample}
\end{figure}

We can determine the size of the solution array \\$N = T\cdot k \cdot n = 5 \cdot 3 \cdot 5 = 75$, i.e., $\mathbf{x} = \{0,1\}^{75}$. Further, the problem requires  $T \cdot c \cdot k = 5 \cdot 2 \cdot 3 = 30$ slack variables, i.e., $\mathbf{y} = \{0,1\}^{30}$, yielding $105$ variables in total. We can assemble the graph Laplacians. Considering the interaction graph $G_1(V,E_1)$, $L^1$ reads
{\footnotesize
$$ L^1 = \begin{pmatrix}
    1 & 0 & -1 & 0 &  \\
    0 & 1 & 0 & -1 & 0 \\
    -1 & 0 & 1 & 0 & 0 \\
    0 & -1 & 0 & 1 & 0 \\
    0 & 0 & 0 & 0 & 0 \\
\end{pmatrix}.
$$}%
It is a simple exercise to determine the remaining Graph Laplacians, which is left to the reader. Given all the constants we can assemble the QUBO problem (see Eq. \ref{eq:H}), which we then solve with a simulated annealing heuristic for two different $\lambda$ settings, $\lambda_1=0.001$ and $\lambda_2=0.1$.\\

Both mapping solutions are displayed in Figs. \ref{fig:sol1} and \ref{fig:sol2} are valid mappings, since all two qubits are placed such that they can conduct their two-qubit gates (Fig. \ref{fig:toyexample}). The mapping with $\lambda_1$ resulted in 12 inter-core communications, while the model using $\lambda_2$ resulted in 8 inter-core communications.  As expected, but still significant, a higher $\lambda$ parameter yielded $33\%$ improvement in the number of inter-core communications. Note that the weighting of $\lambda$ is restricted by the validity of an assignment (Eq. \ref{eq:lam}); meaning, even though the optimal (highest) $\lambda$, which still gives a valid assignment ($H_a = 0$), is \textit{not} a priori known, we do know that a weight set too high can jeopardize the feasibility of the solution, which is why the method must verify $H_a = 0$ to guarantee a valid assignment.

\begin{figure}
    \centering
    \includegraphics[width=0.7\linewidth]{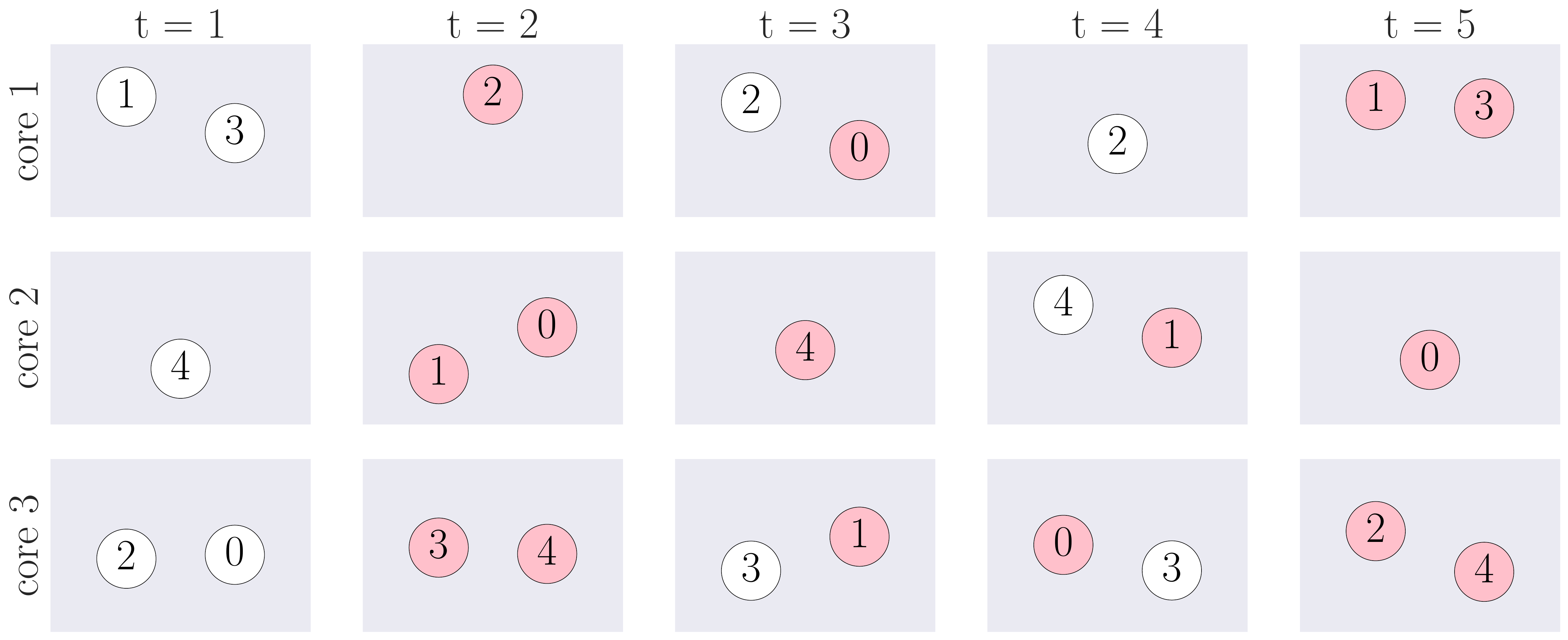}
    \caption{$\lambda_1 = 0.001$ mapping solution of the QUBO model, where the x-axis is the timeline and the y-axis marks the different cores. Red-colored nodes signify that these qubit states have been transferred from the previous slice. This mapping hence resulted in 15 inter-core communications.}
    \label{fig:sol1}
\end{figure}

\begin{figure}
    \centering
    \includegraphics[width=0.7\linewidth]{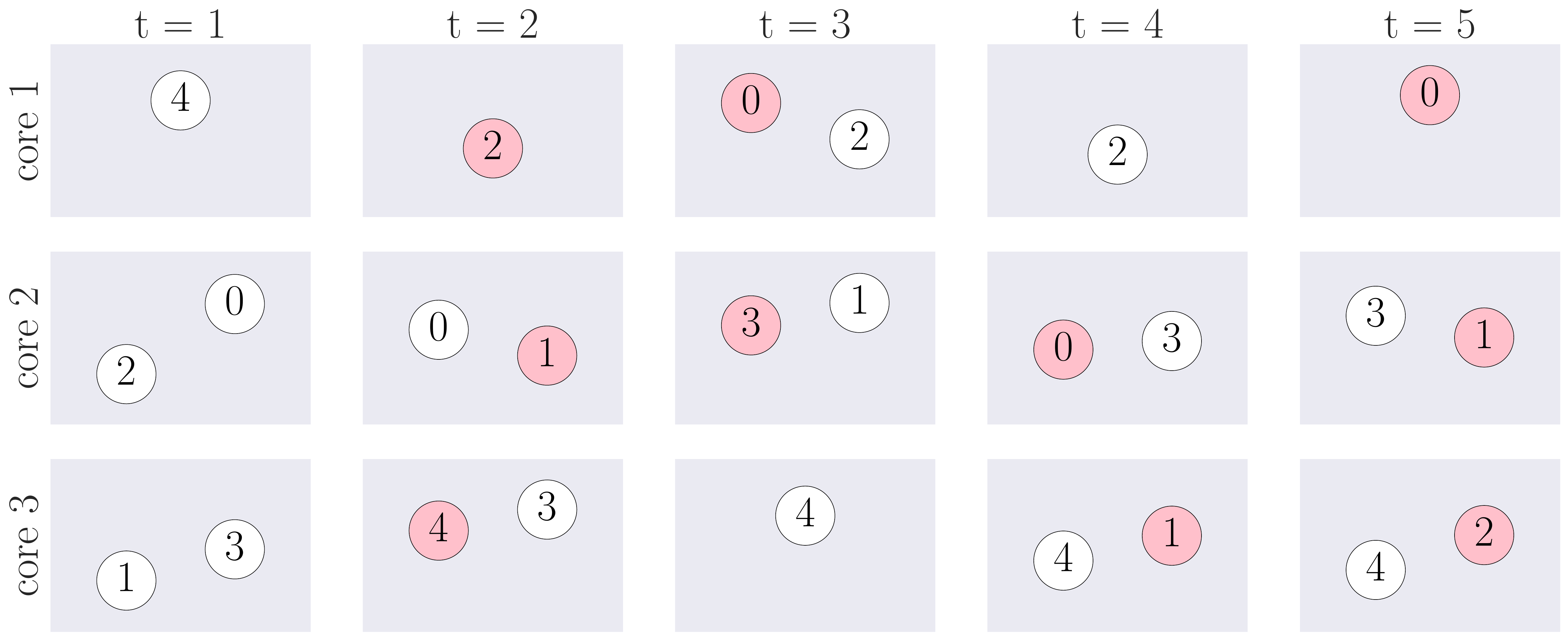}
    \caption{$\lambda_2 = 0.1$  mapping solution of the QUBO model, where the x-axis is the timeline and the y-axis marks the different cores. Red-colored nodes signify that these qubit states have been transferred from the previous slice. This mapping hence resulted in 10 inter-core communications.}
    \label{fig:sol2}
\end{figure}

\subsection{Time Slices of a Quantum Circuit}  \label{Slicing}
Solving the quantum circuit mapping problem for multi-core systems with the introduced formulation requires a quantum circuit to be partitioned into smaller blocks of gate sequences, where each so-called circuit slice $t$ is represented by an interaction graph $G_t(V,E_t)$. The objective is to generate graphs, where each can represent as many gates of the circuit as possible, provided that all gates in a slice can be executed without moving qubits between cores. To do so, we use a recursive procedure, outlined in Alg. \ref{alg:slicing}.

Given a list of gates $\Psi$ of a quantum circuit and a list of empty graph objects stored in the list \textit{slices}, we iterate over all gates and apply the function \textsc{addGate} in each step. The graph objects are initially empty and are filled over the course of the iterations, with every iteration variable $t$ is updated accordingly, marking the number of non-empty graph objects in \textit{slices}, i.e., the number of graph objects, which contain at least one edge. \textsc{addGate} then decides to which graph in the list the gate is assigned. Let us consider some arbitrary gate $\Psi_l =(i_1,i_2)$ involving qubits $i_1,i_2 \in V$, one of the following four cases is about to happen: 
\begin{itemize}
    \item[(1)] the function returns immediately if the gate is already present in the current slice $t$;
    \item[(2)] the gate needs to be assigned to the subsequent slice $t+1$ if either or both of the qubits $i_1,i_2$ are involved in the current slice in least one other gate. In other words a gate ($i_1,i_2$) cannot be added to a slice, if another gate ($i_1,.$) or ($i_2,.$) is already present, as then the gates are not concurrently executable; 
    \item[(3)] the gate is simply added to current circuit slice $t$, if it is the first slice $t=0$ and 1) and 2) are False, i.e., if the gate is not present and none of the qubits has been used;
    \item[(4)] the function is called again using the previous slice $t-1$, if the current slice is not the first, i.e., $t>0$ and 1) and 2) are False.
\end{itemize}
\renewcommand{\algorithmicrequire}{\textbf{global }}
\begin{algorithm}
\begin{minipage}{0.95\linewidth}
        \centering
\scriptsize
\textbf{Input} $\Psi$, slices\Comment{ list of quantum circuit gates, list of empty graphs}\\\vspace{.1cm} 
\textbf{Output} slices \Comment{ list of interaction graphs $G_t$}\\\vspace{.1cm}
\begin{algorithmic} 
\scriptsize
    \State \Function{addGate}{quantum circuit gate, $t$} \Comment{defines cases (1)-(4)} 
    \State \uIf{gate in slices[t]} {return;}\uElseIf{any(gate$_i$ used in slices[t])}{
    \State $t\gets t+1$
    \State slices[$t$] $\gets$ gate
    \State return;}\uElseIf{$t$ is 0}{
    \State slices[$t$]$\gets$ gate 
    \State return;}\Else{return \textsc{addGate}(gate, $t-1$);}
\EndFunction\vspace{.2cm} 
\State \algorithmicrequire slices  \Comment{list of empty graph objects}\vspace{.2cm}  
\State  \For{gate in $\Psi$}{\algorithmiccomment{main iteration, conducts slicing}
\State $t \gets$ N.o. non-empty graphs in slices 
\State \textsc{addGate}(gate,\ $t$)
}
\end{algorithmic}
    \caption{Slicing the quantum circuit}
    \label{alg:slicing}
\end{minipage}
\end{algorithm}

\section{Experimental setup} \label{Sec4}

All tests were run on an Intel(R) Xeon(R) Silver 4210 CPU @ 2.20GHz with 256 GB of RAM and 16 cores running Debian GNU/Linux 11. The method is implemented with Python 3.9.2 using Qiskit's framework \cite{Qiskit} to generate the benchmark programs and to (pre-)process the circuits. We used dwave-neal version 0.5.9 \cite{dwaveweb} simulated annealer \cite{kirkpatrick1983optimization} as our dedicated solver heuristic, as it comes with the open source package of D-Wave Systems and can compete with commercial QUBO solvers \cite{oshiyama2022benchmark}. It is worth noting that the selected solver can be easily replaced with any other solving method for quadratic programming. Due to the memory allocation limits of our system, we implemented a divide-and-conquer approach to divide and solve more amenable subsets of the problem before adding them back together.  
Considering the largest 50\% of our problem sizes, up to four divisions are necessary (where one division was sufficient for 62\% of them). Divide-and-conquer allows us to solve arbitrarily large QUBO instances, with the trade-off of adding some locality to the search process. Since 50,000 variables span about 80 time slices on average in our benchmarks, the added locality is still far from what is typically used for a local estimate.

\subsection{Benchmarks}

Similar to previous works \cite{Rodrigo2021, baker2020time}, we use as benchmarks: 

\begin{itemize}
    \item Subroutines which are building blocks of larger circuits (e.g., Shor's algorithm) such as the Quantum Fourier Transform (QFT), the Multi-Target Gate, Draper's QFT Adder \cite{draper2000addition} and Cuccaro's Ripple-Carry Adder \cite{cuccaro2004new} (both in their \textit{fixed} version) utilized as Qiskit circuit implementations\footnote{\url{https://qiskit.org/documentation/apidoc/circuit_library.html##module-qiskit.circuit.library}, 18.04.2023}.

    \item Four instances of randomly generated circuits to fill the gap in structured circuits in terms of the following parameters: i) the number of gates and qubits; ii) circuit depth; and iii) circuit density, that is the average number of two-qubit gates divided by the circuit depth. In this last category, we also include Quantum Volume \cite{cross2019validating}, which is a circuit having equal values for width and depth used to evaluate the overall performance of a quantum computer. Note that quantum volume circuits are by definition challenging to map due to their high gate density.\cite{cross2019validating}.
\end{itemize}
We generated a total of 407 benchmark instances for these nine circuits, which parameters are shown in Fig. \ref{fig:benchmarkoverview}. Note that they cover qubit counts ranging from 50 to 100 qubits, circuit depths from 13 to more than 1,100, and gate densities ranging from 0.67 to 21.42. Our QUBO-based mapping framework as well as benchmark instances are available in an open-source format and can be accessed via  \href{https://github.com/Luisenden/map-quantum-circuits-to-multi-core}{github}. We employed Qiskit-based routines in order to ensure reproducibility by the scientific community.

It is important to note that we observed rather large differences between the circuit instances we used and the ones in \cite{baker2020time}. For instance, the QFT adder of the same number of qubits in \cite{baker2020time} has significantly lower circuit depth, and random circuits of the same number of qubits and depth have a much higher number of gates. Therefore, our benchmark instances are similar to the benchmarks from \cite{baker2020time} in terms of functionality and structure, but the results are not directly comparable.

\begin{figure}[ht]
\centering
\includegraphics[width=0.8\linewidth]{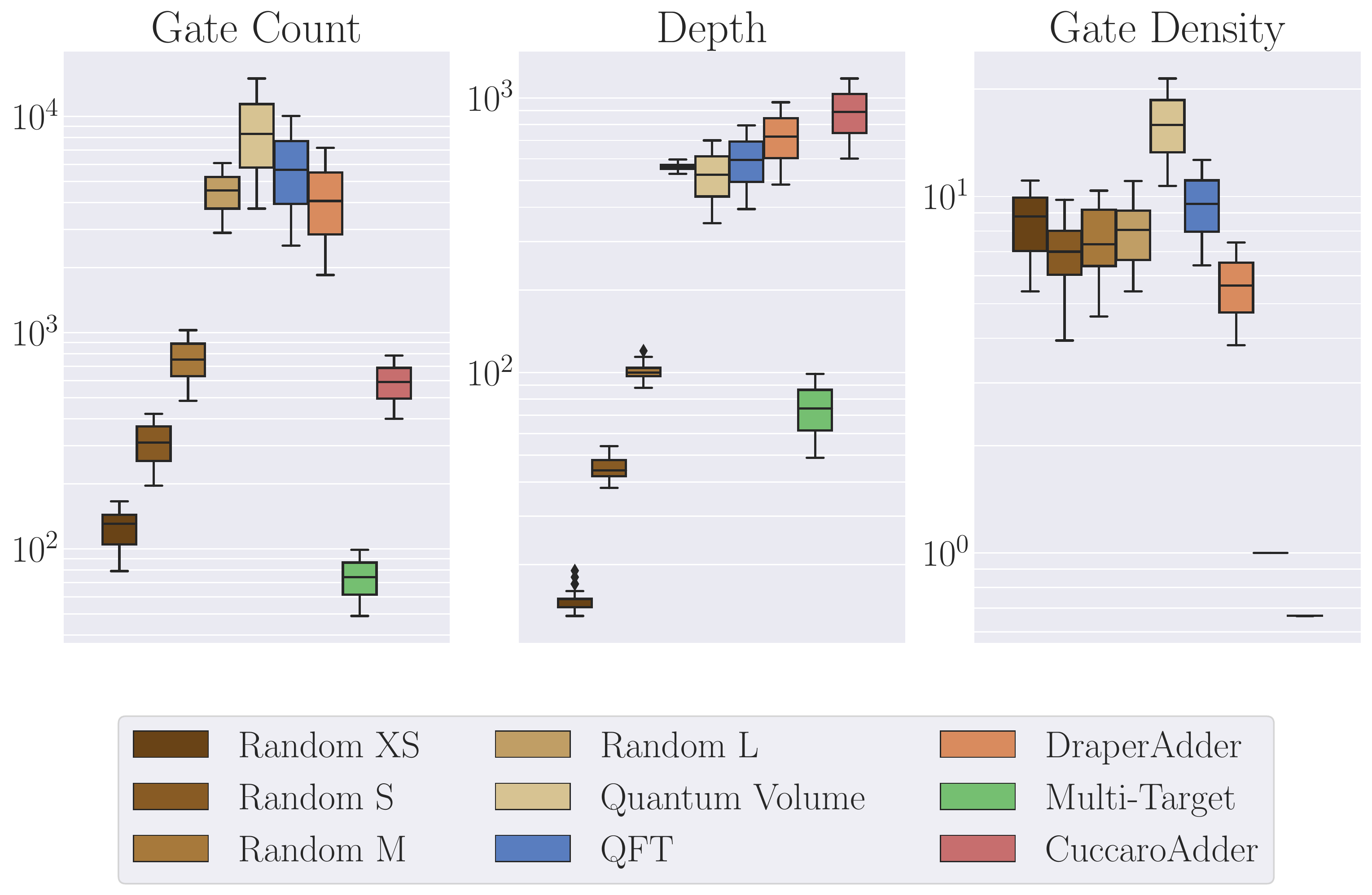}
\caption{Range of relevant circuit parameters of our benchmark set. Circuits between 50 to 100 qubits have been generated using Qiskit's circuit library and decomposed into CNOT gates. Random XS, S, M and L refer to the different depth intervals 13-19, 38-54, 88-120, 529-596, respectively. Note that all random circuits have an average circuit density of $\sim 10$ two-qubit gates relative to depth, whereas Multi-Target Gate and Cuccaro Adder show a much lower density, 1 and 0.5 respectively.}
\label{fig:benchmarkoverview}
\end{figure}

\subsection{Multi-core quantum computing architecture}
We consider two multi-core architecture topologies comprising 10 cores, each of them consisting of 10 all-to-all connected qubits. We assume all-to-all qubit connectivity within the cores since intra-core communication is negligible compared to inter-core communication. The first topology (Fig. \ref{fig:architectures}a) is similar to the one considered in \cite{baker2020time}, with all-to-all core connectivity. The second topology (Fig. \ref{fig:architectures}b) showcases more realistic inter-core connectivity for future developments in modular quantum computing as shown in \cite{laracuente2023modeling}. In order to have the same number of cores in both topologies, we choose a $2 \times 5$ grid layout. Our method is, however, easily adaptable to include more complex architectures as well, which would include intra-core qubit routing like already existing modular architectures shown in \cite{laracuente2023modeling}. In addition, note that our method is not restricted to any particular size of the architecture.

\begin{figure}[t!]
	\centering
	\includegraphics[width=0.7\linewidth]{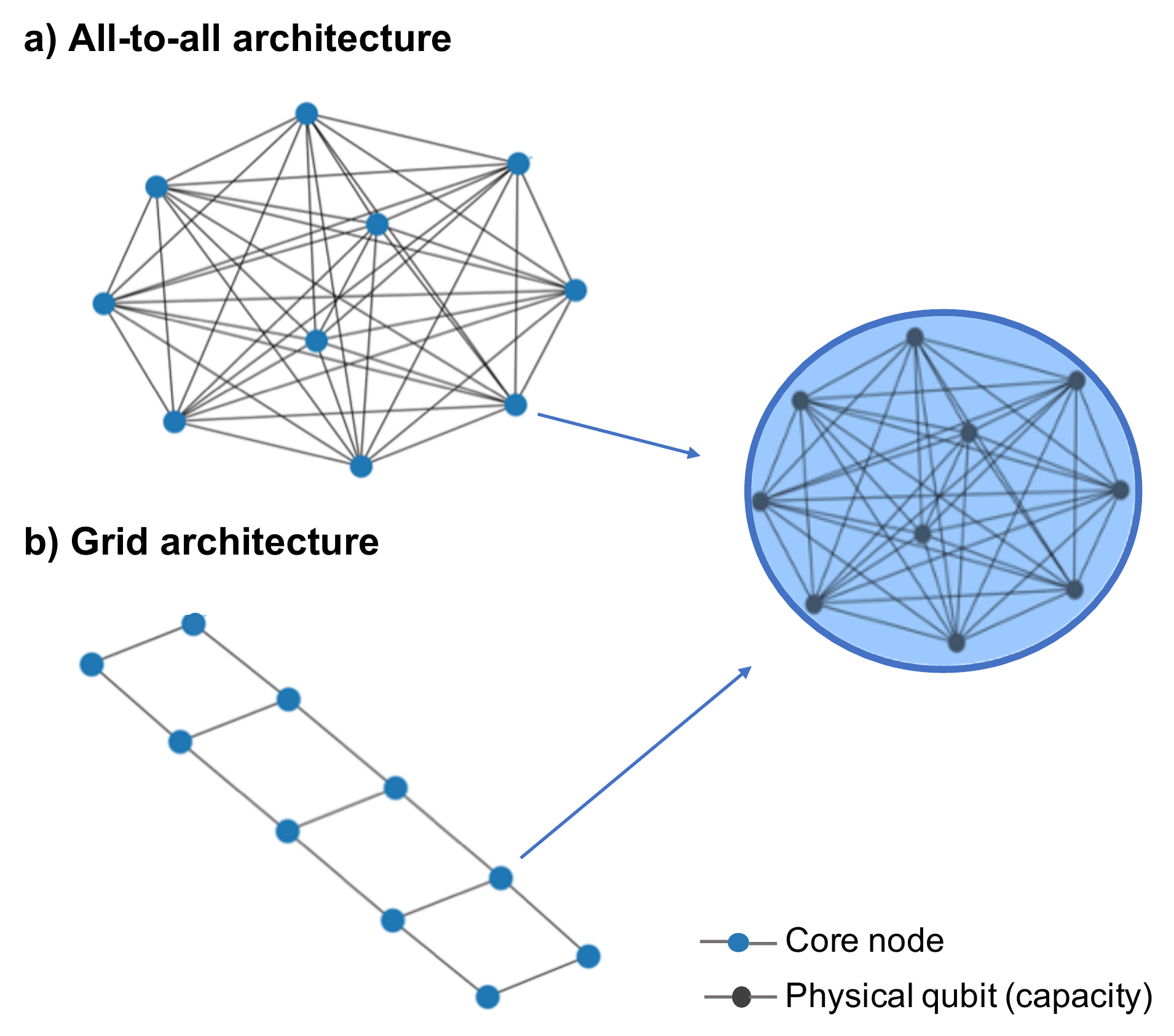}%15cm
 	\caption{Two different multi-core architectures considered in our experiments: \textbf{a)} All-to-all connected cores and \textbf{b)} 2D Grid core connectivity. Each node in the two graphs represents a core and the edges correspond to communication links between the cores. On the right, the intra-core qubit topology is shown, consisting of 10 all-to-all connected qubits.}
 	\label{fig:architectures}
\end{figure}

\subsection{Performance metrics}
Our main performance metric is the number of inter-core communications $M$ between cores, also termed non-local communications, required for executing a circuit on the quantum multi-core system. They are calculated using $H_t$ as previously shown in Ex. \ref{ex:countstate transfers}. In addition, we provide the execution time for all mapping attempts.

\section{Results and discussion}
\label{Sec5}

\subsection{Number of inter-core communications}
In this section, we analyze and discuss the results of our experiments, in which we tested our method with the circuits as described above on the two different multi-core architectures depicted in Fig. \ref{fig:architectures}. As we just mentioned, we use as performance metrics  the number of inter-core communications and execution time and analyze  how they relate to the structure of the circuits, in particular circuit density, and how they scale with a larger number of qubits, gates and depth.
\subsubsection{Layout all-to-all}

Fig. \ref{fig:results} depicts the results of all benchmarks in this study. We ran 50-100 qubit-sized circuits for each benchmark circuit (except DraperAdder and CuccaroAdder, which exhibit only even or odd numbers of qubits in their fixed version). We obtained a success rate of 87\%, which means a valid assignment could be found in the first attempt for 354 out of 407 circuits.

Fig. \ref{fig:results}.\textbf{a)} shows the number of inter-core communications for all benchmark circuits. The best performers are clearly Random XS, Random S, Random M and Multi-Target, which stayed below 5,000 inter-core communications. This is not surprising, as they cover the lower range of gate count and depth (Fig. \ref{fig:benchmarkoverview}), i.e., smaller circuits require smaller numbers of inter-core communications. On the higher end of the performance measure lie the adders, QFT, Random L and Quantum Volume, which yielded inter-core communications between 2,000 and 17,000. The trend of circuits with more gates and higher depth resulting in higher inter-core qubit communications is quite clear. However, the QFT and Quantum Volume benchmarks displayed an opposite tendency, as the Quantum Volume benchmark resulted in 2,800 less inter-core communication on average compared to QFT, even though Quantum Volume has about $2,000$ more two-qubit gates, $1.7$ times higher density and similar depth. 

When we look at the inter-core qubit communications relative to the number of two-qubit gates in a circuit \ref{fig:results}.\textbf{b)}, the ranking of benchmarks changes significantly. In this picture, the benchmark Multi-Target, though presenting one of the smaller absolute outcomes, yields up to five inter-core communications per two-qubit gate, similar to our largest benchmark, the CuccaroAdder. The structure of these circuits can serve as an explanation, as both of them are hardly parallelizable: a circuit layer comprises only at most one two-qubit gate. This circuit structure implies circuit slices where $|E_t|$ is small and qubit assignments between circuit slices are almost identical. This is a disadvantage for the proposed method, as it performs better when a significant number of changes is required between circuit slices(which is typically the case for dense, i.e., highly parallelized circuits): considering the 50\% of the circuits with the highest density, the relative inter-core communications stay below 0.8 on average (less than one communication per two-qubit gate), while the other half of the circuits exhibit 19 inter-core communications per gate on average. The Quantum Volume benchmark accentuates these findings further: with 8,000 two-qubit gates it has by far the highest gate count and highest parallelization, and it is the second-best performer in the relative picture \ref{fig:results}.\textbf{b)} with one inter-core communication every second two-qubit gate.  \ref{fig:results}.\textbf{c)}. depicts the weighting factors $\lambda$ (\ref{eq:lam}), chosen between $0.7$ and $0.006$. As the term $H_t$ increases proportionally to the circuit size, we choose $\lambda$ accordingly to achieve $\lambda H_t \lesssim 1$, we can see this trend clearly. It can be observed, that on the selected value, the number of inter-core communications increases, which is the expected behavior. For the sake of a high success rate and efficiency in producing the first mapping results, we determined a conservative selection of these weighting parameters.

\FloatBarrier
\begin{figure}[!ht]
\includegraphics[width=\linewidth]{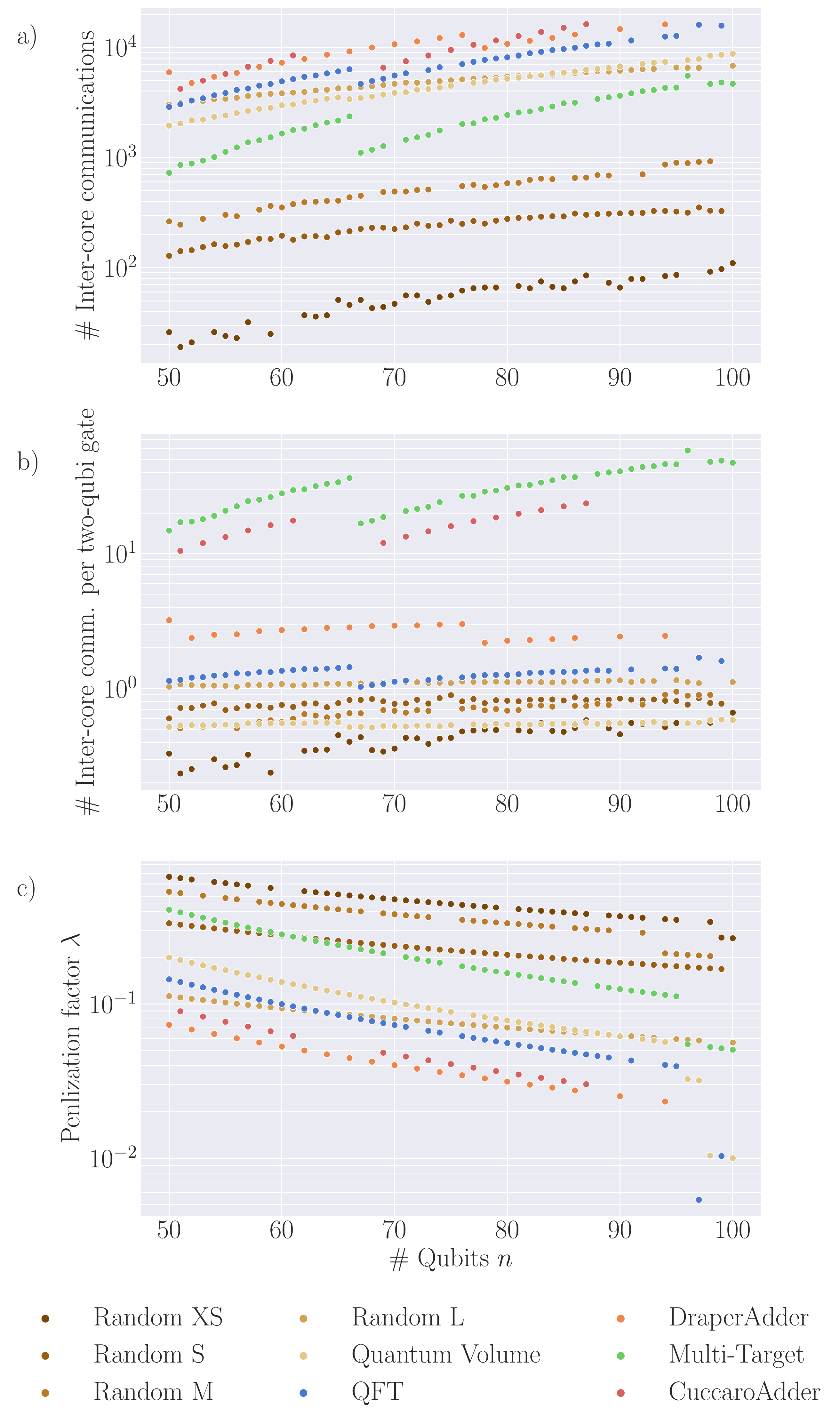}
\caption{Results for all benchmarks run on an all-to-all multi-core layout. \textbf{a)} and \textbf{b)} summarize the inter-core communications in absolute and relative values to two-qubit gates, respectively, for successful mappings between 50 and 100 qubits. The tendency for higher amounts of inter-core communication for larger circuits is clearly detectable. However, from a relative to two-qubit gate point of view, the circuits with high density yield lower outcomes, i.e., random circuits and QFT. \textbf{c)} depicts the penalization factor $\lambda$ chosen for each run.}
\label{fig:results}
\end{figure}

\subsubsection{2D-Grid layout}
In this experiment we tested the method for qubit sizes $n\in[50,75,100]$ on a $2\times 5$ grid quantum architecture layout as depicted in Fig. \ref{fig:architectures}\textbf{b)}. Tab. \ref{tab:2dgrid} lists the results, showing a similar dependency on the problem size in terms of inter-core communications of the circuit to the all-to-all layout experiment. The number of inter-core communications is however higher than the results of the first experiment due to longer hop distances between cores; for all 50-qubit circuits, for example, the number of inter-core communications is about twice as high compared to the all-to-all layout. With both layouts, we can detect better performance values for random circuits. Our selected quantum circuits were characterized in terms of standard parameters, such as the number of qubits and depth. However, there are other parameters, such as the percentage of two-qubit gates, and interaction graph characteristics, which were shown to have correlations to what can be expected from a mapping outcome \cite{bandic2022interaction}. Since random and real circuits can differ significantly in their inherent circuit structure, the latter is expected to have an impact on the performance as well.
\begin{table}[ht]
    \centering
    \resizebox{0.9\linewidth}{!}{%
    \begin{tabular}{r|rrrr||r}
    Benchmark &  \# Qubits &  Depth &  \makecell{Two-qubit \\gate count} &     $\lambda$ &  \makecell{\# Number of \\ inter-core comms. }\\
    \hline
 &       50 &    350 &      3750 &  0.1020 &    3036 \\
Quantum Volume &       75 &    150 &      8325 &  0.0450 &    7646 \\
 &      100 &    700 &     15000 &  0.0100 &   17940 \\\hline
   &       50 &     49 &        49 &  0.2041 &    1261 \\
Multi-Target   &       75 &     74 &        74 &  0.0541 &    5097 \\
   &      100 &     99 &        99 &  0.0101 &   13218 \\\hline
      &       50 &     28 &       171 &  0.2333 &     182 \\
Random S       &       75 &     26 &       253 &  0.2222 &     273 \\
      &      100 &     30 &       344 &  0.1000 &     540 \\\hline
       &       50 &     65 &       412 &  0.1333 &     723 \\
Random M       &       75 &     61 &       647 &  0.1778 &     923 \\
     &      100 &     65 &       867 &  0.0667 &    1888 \\\hline
     &       50 &    350 &      2474 &  0.0225 &    7308 \\
Random L       &       75 &    357 &      3730 &  0.0150 &   11789 \\
      &      100 &    344 &      5004 &  0.0112 &   15778 \\\hline
  &       49 &    577 &       384 &  0.0389 &    7450 \\
CuccaroAdder  &       75 &    889 &       592 &  0.0163 &   19827 \\
  &       99 &   1177 &       784 &  0.0046 &   45275 \\\hline
         &       50 &    394 &      2525 &  0.0816 &    3947 \\
QFT            &       75 &    594 &      5661 &  0.0450 &    7443 \\
          &      100 &    794 &     10050 &  0.0025 &   36800 \\\hline
    &       50 &    484 &      1850 &  0.0610 &    7769 \\
    DraperAdder &       76 &    744 &      4294 &  0.0287 & 16675 \\
     &      100 &    984 &      7450 &  0.0017 &   54384 \\\hline \vspace{0.1cm}
    \end{tabular}}
    \caption{Results for selection of benchmarks on a $2\times5$-grid architecture.}
    \label{tab:2dgrid}
\end{table}

\subsection{Execution Time}
Since execution time depends on the problem size \\$N= T\cdot n \cdot k$ (i.e., the circuit size and multi-core system size), but not on the factors in the core-distance matrix $d$, distinguishing between the two architecture-layout experiments is not required. Fig. \ref{fig:runtime} shows the execution runtimes which resulted in values between 0.1 and 138.36 minutes, where 3/4 of the circuits were successfully compiled in 30 minutes or less. Although these execution times are high for a compiler method, we must highlight that this method is classical by nature. One can therefore justify usage of our compilation strategy, as savings of even a few milliseconds are consequential in the success probability of a quantum algorithm \cite{siraichi2018qubit}, especially in the case of inter-core communication \cite{humphreys2018deterministic}. Furthermore, for ease of reproducibility and comparability, there were no HPC resources or similar performance-improving tools employed, which are typically accessible for commercial use.

\begin{figure}
    \centering
    \includegraphics[width=\linewidth]{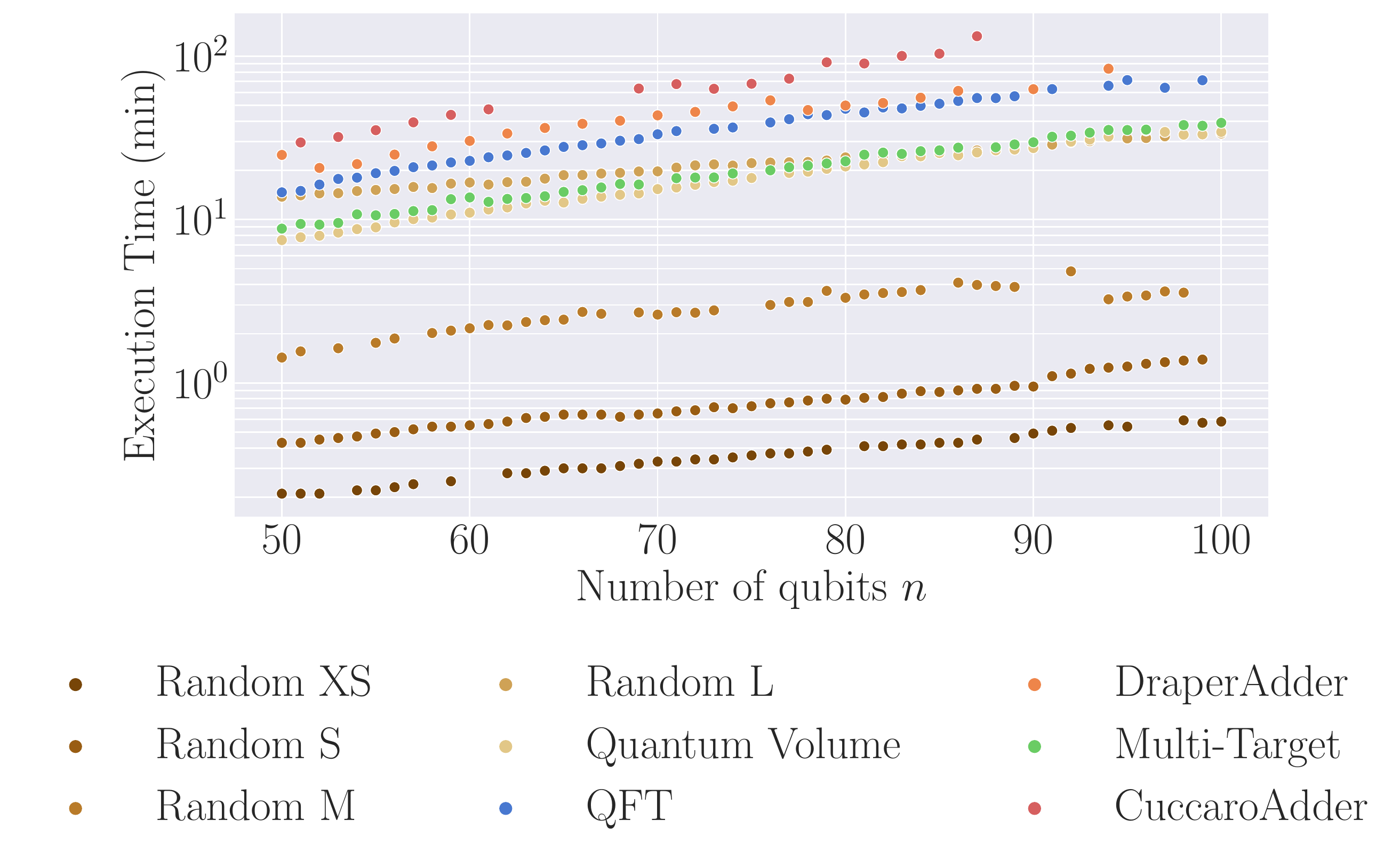}
    \caption{Execution times of the QUBO method, which resulted in a valid mapping solution. The results span a range from the smallest circuit (50-qubits Random XS) of about 12 seconds up to over 138 min for the largest instance (100-qubits CuccaroAdder).}
    \label{fig:runtime}
\end{figure}

\section{Conclusion and future work}

\label{Sec6}

Multi-core quantum computing architectures are one of the most promising approaches towards large-scale quantum computers, for which it is required to develop new quantum circuit mapping techniques that consider the inter-core communication requirements.

In this study, we proposed a novel approach for mapping quantum circuits to multi-core quantum architectures based on QUBO. The main strengths of our method lie in the formulation of the QUBO itself, as the structure permits i) the decoupling of the problem definition from the solver, as well as ii) superseding limitations of look-ahead approaches used in previous solutions. We tested the method's functionality for a wide range of  benchmarks on two different  multi-core architecture layouts composed of 10 cores with a capacity of 10 qubits each. 
Taking stock of the analysis of our benchmark experiments, we expect a success rate of 87\% to find a solution, where the most promising results could be achieved with circuits exhibiting high density and  shallow circuits \cite{blume2020volumetric}. 
Note that highly-parallelized circuits are usually the most challenging to run on quantum devices and are therefore often used as benchmarks to test them, which makes them most relevant to quantum computing of the near- to mid-term future \cite{broadbent2009parallelizing}. Furthermore, our method is easily adjustable to different architecture layouts as these changes require only altering the factors in the objective function. As a direct and positive consequence, these adjustments do not affect the execution time of the method. 

Our method, on the other hand, faces some challenges with the scalability of quantum circuits in terms of circuit depth. In order to improve our approach, including its scalability capabilities, in our future work we will be focused on:
\begin{itemize}
    \item Finding a suited weighting parameter $\lambda$, which is, in general, a non-trivial task and is a matter of ongoing research \cite{verma2022penalty}. Based on these methods we can employ an existing technique that fits our problem instances, which will be our first solution for improving the circuit success rate.
    \item  Differentiating between remote two-qubit gates (operations between two separate qubits in different cores) and qubit state transfers. This will help to better optimize  the amount of inter-core communication, by  doing the qubit transfers only when necessary, and also make more realistic multi-core architecture when combined with investigating other inter-core communication links \cite{10.1145/3579367,rodrigo2022characterizing}. 
    \item Solving the scalability limitations related to circuit depth by decreasing the number of decision variables via an alternative problem formulation to a gate assignment version instead of the qubit assignment. This alternative problem approach in its prototype state is already in the testing phase.
    \item Improving the runtime performance by parallelizing the algorithm execution (the algorithm is carried out sequentially at the moment) and by using the commercial QUBO problem solvers with superior computational power. The latter will also help to handle larger problem sizes (up to $10^6$ variables) \cite{oshiyama2022benchmark}.
\end{itemize}
Besides the described areas of improvement, an interesting direction for further experiments will be different layouts in both multi-core architecture and intra-core couplings, which will include topologies that are already proposed in \cite{laracuente2023modeling}. In addition to that we plan to perform an in-depth scalability analysis regarding the number of cores, physical and logical qubits as well as circuit depth, to find the actual limits of our algorithm as in \cite{ovide2023mapping}.

In summary, we expect that the introduced method exhibits potential for future qubit mapping tasks. It already demonstrates promising results for especially challenging mapping tasks to multi-core quantum architectures and exhibits the capability to be easily adjusted to both problem size and multi-core layout.

%\cite{siraichi_qubit_2019}.
% ``Fig.~\ref{fig}'', even at the beginning of a sentence.

% \begin{table}[htbp]
% \caption{Table Type Styles}
% \begin{center}
% \begin{tabular}{|c|c|c|c|}
% \hline
% \textbf{Table}&\multicolumn{3}{|c|}{\textbf{Table Column Head}} \\
% \cline{2-4} 
% \textbf{Head} & \textbf{\textit{Table column subhead}}& \textbf{\textit{Subhead}}& \textbf{\textit{Subhead}} \\
% \hline
% copy& More table copy$^{\mathrm{a}}$& &  \\
% \hline
% \multicolumn{4}{l}{$^{\mathrm{a}}$Sample of a Table footnote.}
% \end{tabular}
% \label{tab1}
% \end{center}
% \end{table}

% \begin{figure}[htbp]
% \centerline{\includegraphics{fig1.png}}
% \caption{Example of a figure caption.}
% \label{fig}
% \end{figure}

\section*{Acknowledgments}

MB and SF would like to acknowledge funding from Intel Corporation. EA and CGA acknowledge support from the EU, grant HORIZON-EIC-2022-PATHFINDEROPEN-01-101099697 (QUADRATURE). SA acknowledges support from the EU, grant HORIZON-ERC-2021-101042080 (WINC). 

\newpage

\bibliographystyle{IEEEtran}
\bibliography{References}

\end{document}